\newcommand\footnoteref[1]{\protected@xdef\@thefnmark{\ref{#1}}\@footnotemark}
\newcommand\rurl[1]{%
	\href{http://#1}{\nolinkurl{#1}}%
}
\spnewtheorem{observation}{Observation}{\bfseries}{\rmfamily}
\let\doendproof\endproof
\renewcommand\endproof{\hfill $\square$\doendproof}
\author{Philipp de Col %\orcidID{0000-0002-9405-3300} 
\and Fabian Klute %\orcidID{0000-0002-7791-3604} 
\and Martin N\"ollenburg } %\orcidID{0000-0003-0454-3937}}
\title{Mixed Linear Layouts:\\ Complexity, Heuristics, and Experiments}
\institute{Algorithms and Complexity Group, TU Wien, Vienna, Austria
\email{philipp.decol@gmx.at,\{fklute,noellenburg\}@ac.tuwien.ac.at}}
\begin{document}

	\maketitle
		
	\begin{abstract}
		A $k$-page linear graph layout of a graph $G=(V,E)$ draws all vertices along a line $\ell$ and each edge in one of $k$ disjoint halfplanes called \emph{pages}, which are bounded by $\ell$.
		We consider two types of pages. In a \emph{stack page} no two edges should cross and in a \emph{queue page} no edge should be nested by another edge. 
		A crossing (nesting) in a stack (queue) page is called a \emph{conflict}. 
		The algorithmic problem is twofold and requires to compute (i) a vertex ordering and (ii) a page assignment of the edges such that the resulting layout is either conflict-free or conflict-minimal.
		While linear layouts with only stack or only queue pages are well-studied, mixed $s$-stack $q$-queue layouts for $s,q \ge 1$ have  received less attention.
		We show \NP-completeness results on the recognition problem of certain mixed linear layouts and present a new heuristic for minimizing conflicts. 
		In a computational experiment for the case $s, q = 1$ we show that the new heuristic is an improvement over previous heuristics for linear layouts.
	\end{abstract}
	
	\section{Introduction}\label{sec:introduction}
	
	Linear graph layouts, in particular book embeddings~\cite{ollmann73,BernhartK79} (also known as stack layouts) and queue layouts~\cite{HeathR92,Heath1992}, form a classic research topic in graph drawing with many applications beyond graph visualization as surveyed by Dujmović and Wood~\cite{DujmovicW04}.
	A $k$-page linear layout $\Gamma=(\prec, \mathcal P)$ of a graph $G=(V,E)$ consists of an order $\prec$ on the vertex set $V$ and a partition of $E$ into $k$ subsets $\mathcal P = \{P_1, \dots, P_k\}$ called \emph{pages}.
	Visually, we may represent $\Gamma$ by mapping all vertices of $V$ in the order $\prec$ onto a line $\ell$. %
	Each page can be represented by mapping all edges to semi-circles connecting their endpoints in a halfplane bounded by $\ell$.
	If a page $P$ is a \emph{stack page}, then no two edges in $P$ may cross, or at least the number of crossings should be minimized.
	More precisely, two edges $uv$, $wx$ in $P$ cross (assuming $u \prec v$, $w \prec x$, and $u \prec w$) if and only if their vertices are ordered as $u \prec w \prec v \prec x$.
	Conversely, if a page $P$ is a \emph{queue page}, then no two edges in $P$ may be nested, or at least the number of nestings should be minimized. 
	Here, an edge $wx$ is nested by an edge $uv$ if and only if their vertices are ordered as $u \prec w \prec x \prec v$ (under the same assumptions as above).

	Stack and queue layouts have mostly been studied for planar graphs with a focus on investigating the stack number (also called book thickness) and the queue number of graphs, which correspond to the minimum integer $k$, for which a graph admits a $k$-stack or $k$-queue layout. 	
	It is known that recognizing graphs with queue number 1 or with stack number 2 is \NP-complete~\cite{BernhartK79,HeathR92}.
	Further, it is known that every planar graph admits a 4-stack layout~\cite{Yannakakis89}, but it is open whether the stack number of planar graphs is actually 3.
	Due to their practical relevance, book drawings, i.e., stack layouts in which crossings are allowed, have also been investigated from a practical point of view. 
	Klawitter et al.~\cite{Klawitter2018} surveyed the literature and performed an experimental study on several state-of-the-art book drawing algorithms aiming to minimize the number of crossings in layouts on a fixed number of stack pages. 
	Conversely, for queue layouts it was a longstanding open question whether planar graphs have bounded queue number~\cite{Heath1992}; this was recently answered positively by Dujmović et al.~\cite{DujmovicJMMUW19}.
	
	Mixed layouts, which combine $s \ge 1$ stack pages and $q \ge 1$ queue pages, are studied less. 
	For an $s$-stack $q$-queue layout $\Gamma = (\prec, \mathcal P)$, the set of pages $\mathcal P$ is itself partitioned into the stack pages $\mathcal S = \{S_1, \dots, S_s\}$ and the queue pages $\mathcal Q = \{Q_1, \dots, Q_q\}$.
	Heath and Rosenberg~\cite{HeathR92} conjectured that every planar graph admits a 1-stack 1-queue layout, but this has been disproved recently by Pupyrev~\cite{p-mllpg-17}, who conjectured that instead every bipartite planar graph has a 1-stack 1-queue layout. 
	Pupyrev further provides a SAT-based online tool for testing the existence of an $s$-stack $q$-queue layout\footnote{\url{http://be.cs.arizona.edu}}.

	\smallskip
	\noindent\textit{Contributions.} 
	We first show two \NP-completeness results in Section~\ref{sec:hardness}.
	The first one shows that testing the existence of a 2-stack 1-queue layout is \NP-complete, and the other proves that an \NP-complete mixed layout recognition problem with fixed vertex order remains \NP-complete under addition of 
	stack or queue pages. 
	Next, we focus our attention on 
	1-stack 1-queue layouts 
	and propose, to the best of our knowledge, the first heuristic 
	targeted at minimizing conflicts in 1-stack 1-queue layouts, see Section~\ref{sec:heuristic}.
	In a computational experiment in Section~\ref{sec:experiments} we show that our heuristic achieves fewer conflicts compared to previous heuristics for stack layouts with a straightforward adaptation to mixed layouts.
	
	\smallskip \noindent {\emph{Statements whose proofs are located in the appendix are marked with $\star$.}}

	\section{Complexity}\label{sec:hardness}
	In this section we give new complexity results regarding mixed linear layouts.
	For Theorem~\ref{thm:1q2s} we first make some useful observations, see Appendix~\ref{app:hardness} for details.
	Let $ K_8 $ be the complete graph on eight vertices and
	$\Gamma =  (\prec,\{S_1,S_2\},\{Q\}) $ a 2-stack 1-queue layout of a $K_8$.
	Using exhaustive search\footnote{\label{ftn:code}Source code available at \url{https://github.com/pdecol/mixed-linear-layouts}}%
	we verified that in such a  2-stack 1-queue layout the three longest edges are in $ S_1 \cup S_2 $ and the edges between the first and third, and the sixth and eighth vertex in $ \prec $ are in $ Q $. Finally, for two $ K_8 $'s only the last and first vertex from each $ K_8 $ can interleave. 

	Let $ G_1 = (V_1,E_1) $ and $ G_2 = (V_2,E_2) $ be two distinct $K_8$'s. 
	A \emph{double-$ K_8 $} $ G $ is formed, by identifying two so-called \emph{shared} vertices $ u,v $ of $ G_1 $ and $ G_2 $ with each other
	and adding one more edge $ wz $ between so-called \emph{outer} vertices $ w \in V_1 $, $ z \in V_2 $ 
	such that neither $ w $ nor $ z $ is one of the shared vertices, see Fig.~\ref{fig:reduction}(a).
	For any 2-stack, 1-queue layout $\Gamma =  (\prec,\{S_1,S_2\},\{Q\}) $ of a double-$ K_8 $ we
	verified with exhaustive search that the shared vertices have to be the two middle vertices in $ \prec $
	and that the outer vertices are always the first and last vertices in the ordering of such a layout.

	\begin{restatable}[$ \star $]{lemma}{lemnoinsidevertex}
		\label{lem:no_inside_vertex}
		Let $ G = (V,E) $ be a double-$ K_8 $ with outer vertices $ w,z $
		and two additional vertices $ a,b \in V $ with an edge $ ab \in E $.
		In every 2-stack 1-queue layout $ \Gamma = (\prec,\{S_1,S_2\},\{Q\}) $ of $ G $,
		with $ w \prec a \prec z $ and $ b \prec w $ or $ z \prec b $, it holds that
		$ ab \in Q $ and $ a $ is between the first  or last three vertices in the double-$ K_8 $.
	\end{restatable}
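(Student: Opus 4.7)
The plan is to leverage the structural observations on the 2-stack 1-queue layout of the double-$K_8$ to fix the pages of several key edges, then rule out both stack pages for $ab$, and finally use a specific queue edge to constrain the position of $a$. Let $p_1,\dots,p_{14}$ denote the double-$K_8$ vertices in $\prec$-order, so $w=p_1$, $z=p_{14}$, the shared vertices are $p_7,p_8$, the six edges $(p_1,p_7),(p_1,p_8),(p_2,p_8),(p_7,p_{13}),(p_7,p_{14}),(p_8,p_{14})$ lie on the stacks as the three longest edges of each $K_8$, and $(p_1,p_3),(p_{12},p_{14}) \in Q$.

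The second step is to pin down the stack assignments by iterated crossing arguments. Since $(p_1,p_7)$ and $(p_2,p_8)$ cross, WLOG $(p_1,p_7)\in S_1$ and $(p_2,p_8)\in S_2$. Then $(p_7,p_{13})$ crosses $(p_2,p_8)$, so $(p_7,p_{13})\in S_1$, and $(p_8,p_{14})\in S_2$ since it crosses $(p_7,p_{13})$. Analogously $(p_1,p_8)\in S_2$ (it crosses $(p_7,p_{13})$) and $(p_7,p_{14})\in S_1$ (it crosses $(p_2,p_8)$).

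Now assume WLOG $b\prec w$; the case $z\prec b$ is symmetric. Let $i\in\{1,\dots,13\}$ be such that $p_i\prec a\prec p_{i+1}$; since $b\prec p_1$, a double-$K_8$ edge $(p_j,p_k)$ with $j<k$ crosses $ab$ iff $j\le i<k$, and is nested inside $ab$ iff $k\le i$. The three $S_1$ edges identified above have crossing ranges $\{1,\dots,6\},\{7,\dots,12\},\{7,\dots,13\}$, whose union is $\{1,\dots,13\}$, and analogously for the three $S_2$ edges; hence $ab$ would cross a same-stack edge for every admissible $i$, forcing $ab\in Q$. Because $b$ is leftmost, no queue edge can nest $ab$; conversely, $ab$ nests $(p_1,p_3)\in Q$ whenever $i\ge 3$. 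Thus $i\le 2$, meaning $a$ lies between $p_1$ and $p_3$, i.e., among the first three vertices of the double-$K_8$. The symmetric argument for $z\prec b$ uses the queue edge $(p_{12},p_{14})$.

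The main obstacle is the bookkeeping of the forced stack assignments in the second step: one has to verify a handful of small crossings among the six longest edges to propagate the labels $S_1,S_2$ consistently. Once these are settled, the case analysis on the slot index $i$ of $a$ is short and mechanical.
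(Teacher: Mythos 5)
Your proof is correct and follows essentially the same strategy as the paper's: use the exhaustive-search observations to fix the order (shared vertices in the middle, outer vertices at the ends) and the forced stack/queue assignments of the long edges, show that the crossing ranges of the forced stack edges cover every possible slot for $a$ so that $ab$ cannot sit on either stack, and then use the forced queue edges $v_1v_3$ and $v_6v_8$ to pin $a$ to the first or last three positions. The only cosmetic difference is which forced stack edges you track (you propagate a $3$--$3$ split of six long edges, while the paper works with the four outer-to-shared edges $wu,wv,uz,vz$ split $2$--$2$); the coverage argument and the final nesting step are identical.
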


	\begin{figure}[tbp]
		\centering 
		\includegraphics{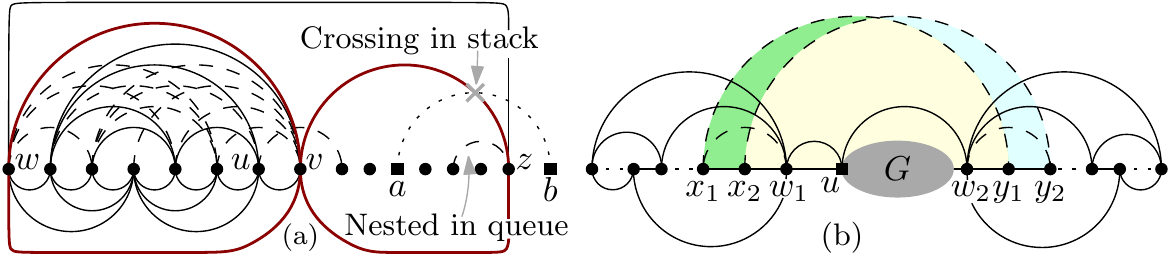}
		\caption{(a) A $ 2 $-stack $ 1 $-queue layout of a double-$ K_8 $. Only the left $ K_8 $ is drawn fully, dashed edges are in the queue page. (b) Sketch of the gadget for Theorem~\ref{thm:1q2s}.}
		\label{fig:reduction}
	\end{figure}
	
	\begin{restatable}[$ \star $]{corollary}{corinterleave}
		\label{cor:interleave}
		Let $ G_1 = (V_1,E_1) $ and $ G_2 = (V_2,E_2) $ be two double-$ K_8 $'s.
		In a $ 2 $-stack $ 1 $-queue layout $\Gamma$ of $ G_1 \cup G_2 $ with linear order $ \prec $,
		either $ u \prec v $ 
		or $ v \prec u $ for all $ u \in V_1 $, $ v \in V_2 $.
	\end{restatable}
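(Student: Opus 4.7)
The plan is to assume $V_1$ and $V_2$ interleave in $\prec$ and derive a contradiction by combining Lemma~\ref{lem:no_inside_vertex} with the rigid inner structure of a double-$K_8$ layout. First I would note that restricting $\Gamma$ to each $V_i$ yields a valid 2-stack 1-queue layout of $G_i$; hence, by the structural observations preceding the lemma, $w_i$ and $z_i$ are the first and last $V_i$-vertices in $\prec$, the shared vertices of $G_i$ occupy $V_i$-positions $7$ and $8$, and only the first and last vertex of each $K_8$ may interleave with vertices of another $K_8$.

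Assume for contradiction that $V_1$ and $V_2$ interleave. By the symmetry between $G_1$ and $G_2$, I may assume some $a \in V_2$ satisfies $w_1 \prec a \prec z_1$. I then split into two main cases. In Case~A, some outer vertex of $G_2$ lies outside the $\prec$-interval $(w_1, z_1)$; WLOG $w_2 \prec w_1$. Every vertex of $G_2$'s left $K_8$ (which includes both shared vertices of $G_2$) is $G_2$-adjacent to $w_2$, so every such vertex that lies inside $(w_1, z_1)$ yields a $G_2$-edge from inside to outside $V_1$'s range. Applying Lemma~\ref{lem:no_inside_vertex} with $G_1$ as the double-$K_8$ confines each such inside $V_2$-vertex to one of the four narrow $\prec$-gaps between the first three or last three $V_1$-vertices, and forces the cross-edge into the queue page $Q$. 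A sub-case analysis on whether $z_2$ is also outside (covering $G_2$'s right $K_8$ symmetrically) and on which gaps the shared vertices of $G_2$ fall into then exhibits two forced $Q$-edges that necessarily nest, contradicting the queue property.

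In Case~B, both $w_2$ and $z_2$ lie in $(w_1, z_1)$, so the whole of $V_2$ nests inside $V_1$'s range and Lemma~\ref{lem:no_inside_vertex} no longer applies directly. Here I would exploit the observation that two $K_8$'s can interleave only at their first and last vertex: the middle six vertices of each $K_8$ of $G_2$ cannot intersperse with the middle six of either $K_8$ of $G_1$ except at their extremes, pinning the two $K_8$'s of $G_2$ into tight corridors between specific consecutive $V_1$-vertices. Combined with the forced stack-page placement of the outer edges $w_1z_1$ and $w_2z_2$ (both among the three longest edges of their respective double-$K_8$ layouts) and the analogous constraint on the longest edges inside each $K_8$, a pigeonhole argument on the two stack pages across the four $K_8$'s of $G_1 \cup G_2$ produces a pair of long edges that must cross. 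The main technical obstacle is this Case~B analysis, which requires carefully tracking the positions of the two sets of shared vertices and the rigidly placed long $K_8$-edges within one ambient $\prec$ to exhibit the unavoidable conflict.
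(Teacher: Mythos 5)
Your high-level strategy (assume interleaving, then combine Lemma~\ref{lem:no_inside_vertex} with the rigidity observations) uses the same toolkit as the paper, but the paper's actual proof is a two-line argument: any interleaving that involves a non-extreme vertex of one of the four constituent $K_8$'s contradicts Observation~\ref{obs:K8andK8} outright, and the only residual pattern --- a first/last boundary swap --- is impossible because the outer-to-shared edges of each double-$K_8$ occupy \emph{both} stacks, so the swapped boundary vertices would force a crossing in one of them (equivalently, Lemma~\ref{lem:no_inside_vertex} forces such an outer-to-shared edge into $Q$ while the observations force it into a stack). Your proposal replaces this with a long case analysis whose two decisive steps are asserted rather than carried out: ``a sub-case analysis \dots exhibits two forced $Q$-edges that necessarily nest'' and ``a pigeonhole argument \dots produces a pair of long edges that must cross.'' Neither is obvious, and the burden of the corollary lies precisely there.

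Moreover, the mechanism you propose for Case~B cannot work in its main sub-case. If both $w_2$ and $z_2$ lie in $(w_1,z_1)$ and the interleaving constraints confine $V_2$ to a single gap between consecutive vertices of $V_1$ (the configuration you describe as ``tight corridors''), then every edge of $G_2$ lies strictly inside that gap while every edge of $G_1$ either avoids the gap or spans it entirely; hence no edge of $G_1$ can \emph{cross} any edge of $G_2$, and no pigeonhole on the stack pages can produce a crossing. The contradiction in that configuration has to come from elsewhere: either directly from Observation~\ref{obs:K8andK8} (the middle six vertices of a $K_8$ of $G_2$ would then be sandwiched between two consecutive vertices of a $K_8$ of $G_1$, which the observation forbids), or from a queue nesting (e.g.\ the forced queue edge between the first and third vertices of $G_1$ nests the forced queue edges of $G_2$). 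A similar issue affects Case~A when one double-$K_8$'s span contains the other entirely. You should either quote Observation~\ref{obs:K8andK8} in its full strength, which collapses almost all of your case distinction, or replace the stack-crossing pigeonhole by a queue-nesting/forced-page argument in the fully nested sub-cases.
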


	Let $ G = (V,E) $ be a graph consisting of two double-$ K_8 $'s $ G_1 = (V_1,E_1) $ and $ G_2 = (V_2,E_2) $.
	Let $ w_1 \in V_1 $ and $ w_2 \in V_2 $ be two outer vertices.
	Further, let $ x_1,x_2 \in V_1 $ and $ y_1,y_2 \in V_2 $ be four vertices such that
	$ x_1,x_2 $ are in the same $ K_8 $ as $ w_1 $,
	$ w_2 $ respectively for $ y_1,y_2 $,
	and none of them is a shared vertex.
	Finally, add a vertex $ u $ to $ V $ and
	the edges $ x_1y_1 $, $ x_2y_2 $, $ w_1u $, and $ w_2u $ to $ E $,
	see Fig.~\ref{fig:reduction}(b).
	
	\begin{restatable}[$ \star $]{lemma}{lempositioning}\label{lem:positioning}
		Let $ G = (V,E) $ be the graph constructed as above.
		Then in any $ 2 $-stack $ 1 $-queue layout $ \Gamma = (\prec,\{S_1,S_2\},\{Q\}) $ of $ G $ we find, w.l.o.g.,
		$ w_1 \prec u \prec w_2 $ and $ w_1u, w_2u \in S_1 \cup S_2 $.
	\end{restatable}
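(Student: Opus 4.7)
The plan is to combine Corollary~\ref{cor:interleave} with repeated applications of Lemma~\ref{lem:no_inside_vertex} and the preceding exhaustive-search observations on $K_8$ and double-$K_8$ layouts in order to pin down the linear position of $u$ and the page assignments of $w_1u$ and $w_2u$. By Corollary~\ref{cor:interleave} the vertex sets $V_1$ and $V_2$ do not interleave in $\prec$, so up to reversing $\prec$ we may assume $V_1\prec V_2$; by the double-$K_8$ observations, each $w_i$ must be either the first or the last vertex of $V_i$. The residual freedom is (a) which of the five regions delimited by $V_1$ and $V_2$ contains $u$, and (b) the pages of $w_1u$ and $w_2u$.

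For (a) we proceed by case analysis on the five regions. If $u$ lies strictly inside $V_1$ (between its two outer vertices), then Lemma~\ref{lem:no_inside_vertex} applied to $G_1$ with external pair $a=u$, $b=w_2$ and edge $uw_2$ forces $uw_2\in Q$ and confines $u$ to one of the first or last three positions of $V_1$; the companion edge $w_1u$ then conflicts with one of the $K_8$-internal edges whose page is fixed by the single-$K_8$ observation. A symmetric argument, using $G_2$ with the roles of $w_1$ and $w_2$ swapped, excludes $u$ strictly inside $V_2$. If instead $u$ lies strictly before $V_1$ (resp.\ strictly after $V_2$), the long edge $w_2u$ (resp.\ $w_1u$) spans an entire double-$K_8$: placed on $Q$ it would nest one of the forced internal queue edges of that copy, and placed on $S_1\cup S_2$ it would cross a forced internal stack edge, contradicting the observations in either case. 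Hence $u$ lies strictly between $V_1$ and $V_2$, and a final short argument rules out $w_1=\min V_1$ and $w_2=\max V_2$ by the same spanning-and-nesting reasoning, yielding $w_1\prec u\prec w_2$.

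For (b), suppose for contradiction that $w_1u\in Q$. Given the positions fixed in step (a), the cross edges $x_1y_1,x_2y_2$ together with one of the $K_8$-internal queue-page edges of $G_1$ incident to $w_1$ produce a nesting configuration with $w_1u$ in the single queue page, which is forbidden. Hence $w_1u\in S_1\cup S_2$, and symmetrically $w_2u\in S_1\cup S_2$.

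The main obstacle is the bookkeeping in step (a): a handful of $K_8$-internal edges have their pages fixed by the observations, and for each placement of $u$ one must carefully track the induced forced pages of the cross edges $x_1y_1,x_2y_2$ and of $w_1u,w_2u$ until a conflict is exhibited. This is a finite but lengthy verification, in the same spirit as the exhaustive searches that establish the observations themselves.
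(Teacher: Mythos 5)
Your skeleton matches the paper's: Corollary~\ref{cor:interleave} to separate $V_1$ and $V_2$, then Lemma~\ref{lem:no_inside_vertex} and the exhaustive-search observations to pin down $u$ and the pages of $w_1u,w_2u$. However, you never establish the facts that carry the paper's whole argument: applying Lemma~\ref{lem:no_inside_vertex} to $x_iy_i$ shows that $x_1,x_2$ must be the two positions immediately next to $w_1$ (hence, to avoid nesting the forced queue edges of the far $K_8$, that $w_1$ is the \emph{last} vertex of $V_1$ and $w_2$ the \emph{first} of $V_2$), that $x_1y_1,x_2y_2\in Q$, and that these two queue edges must cross, fixing $x_1\prec x_2\prec w_1\prec w_2\prec y_1\prec y_2$. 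Every contradiction in the case analysis and the entire page-assignment claim in your step (b) rest on this configuration, so deferring it to ``bookkeeping'' leaves the proof without its engine.

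Two of the concrete claims you do make are wrong. First, in the case $u$ strictly inside $V_1$: after Lemma~\ref{lem:no_inside_vertex} confines $u$ to the first or last three positions, your contradiction via ``$w_1u$ conflicts with a $K_8$-internal edge'' fails when $u$ lands among the last three positions, e.g.\ with $x_2\prec u\prec w_1$, where $w_1u$ joins consecutive positions and conflicts with nothing. The actual contradiction there is that the forced queue edge $uw_2$ is nested by $x_1y_1\in Q$. Second, in the case $u$ strictly before $V_1$: an edge on a stack page whose endpoints both lie outside the span of a double-$K_8$ \emph{nests} all of its internal edges rather than crossing them, and nesting is harmless on a stack page, so ``placed on $S_1\cup S_2$ it would cross a forced internal stack edge'' is false. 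This case needs a genuinely different argument (the paper itself glosses over it), and as written your proof does not exclude it.
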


	\begin{theorem}\label{thm:1q2s}
		Let $ G = (V,E) $ be a simple undirected graph. It is \NP-complete to decide if $ G $ admits a 2-stack 1-queue layout.
	\end{theorem}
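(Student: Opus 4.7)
The plan is first to note that the problem lies in \NP: a candidate 2-stack 1-queue layout $(\prec,\{S_1,S_2\},\{Q\})$ can be verified in polynomial time by directly checking the crossing condition on $S_1$ and $S_2$ and the nesting condition on $Q$ pairwise for all edges.

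For \NP-hardness, I would reduce from a classical \NP-complete problem such as 3-\textsc{Sat} (or the convenient variant NAE-3-\textsc{Sat}), exploiting the rigidity machinery developed above. The construction from Lemma~\ref{lem:positioning} serves as a \emph{variable gadget}: two double-$K_8$'s joined by a pivot vertex $u_i$ must, by Corollary~\ref{cor:interleave} together with Lemma~\ref{lem:positioning}, occupy disjoint intervals on opposite sides of $u_i$, yielding a genuine binary left/right choice that encodes the truth value of $x_i$. The forced edges $w_1u_i, w_2u_i$ are committed to the stack pages, leaving the queue page (and whatever stack capacity remains near the pivot) free to carry clause edges. I would then chain the variable gadgets together, for instance via further double-$K_8$'s acting as separators, so that in any valid layout the overall vertex ordering decomposes cleanly into variable regions plus pivots, and each region unambiguously commits to a truth value.

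Clauses are then implemented by short auxiliary subgraphs whose edges connect designated non-outer, non-shared vertices of the relevant variable gadgets, arranged so that they are simultaneously routable on the queue page (with leftover capacity on the stack pages) exactly when at least one literal in the clause is satisfied. Lemma~\ref{lem:no_inside_vertex} is crucial here, since it controls where additional edges between a variable-gadget vertex and a vertex outside that gadget can land, pinning such edges to the queue page and confining their endpoints to the extremal positions of the gadget. The two directions of the equivalence are then: (i) any valid 2-stack 1-queue layout of the reduction instance induces, via the side-choice at each pivot, a satisfying assignment — this direction is driven by a case analysis of the forced page assignments of clause edges under the rigidity results above; and (ii) any satisfying assignment admits an explicit conflict-free layout.

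The main obstacle, and the bulk of the technical work, is the design and verification of the clause gadget. The rigidity imposed by our earlier lemmas leaves very little freedom for clause edges, and one must argue tightly that exactly the unsatisfying assignments produce an unavoidable crossing in both stack pages \emph{and} an unavoidable nesting in the queue page. The converse direction (ii) is equally delicate: one must exhibit a concrete global layout assembled from many variable gadgets, pivots, separator gadgets, and clause edges, and verify by hand that all edges can be distributed across only two stack pages and one queue page without any conflict — this verification is routine per edge but combinatorially dense, so careful bookkeeping (and the exhaustive-search tool already cited in footnote~\ref{ftn:code}) would be used to corroborate the construction.
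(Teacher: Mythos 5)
There is a genuine gap here: your argument is a plan for a SAT reduction whose central component---the clause gadget---is never actually constructed. You acknowledge that ``the bulk of the technical work is the design and verification of the clause gadget,'' but that work is exactly what a proof must contain; without a concrete gadget and a case analysis showing that unsatisfied clauses force a conflict, NP-hardness is not established. Worse, the variable gadget you describe does not do what you claim. Lemma~\ref{lem:positioning} does not give a ``genuine binary left/right choice'': it forces the pivot $u$ to lie strictly between $w_1$ and $w_2$ with both edges $w_1u, w_2u$ on stack pages, i.e., it pins down a unique placement rather than offering two symmetric options that could encode a truth value. So the intended encoding of variables has no mechanism behind it, and the subsequent clause machinery has nothing to hook into.

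The paper takes a much more direct route that you should compare against: it reduces from the recognition of $2$-stack layouts (equivalently, subhamiltonicity), which is already \NP-complete. One takes the graph $G'$ of Lemma~\ref{lem:positioning}, identifies the pivot $u$ with an arbitrary vertex of the input graph $G$, and adds the rest of $G$. By Lemma~\ref{lem:positioning} and induction over neighbors, every vertex of $G$ ends up between $w_1$ and $w_2$, so every edge of $G$ is nested by the queue edges $x_1y_1$ and $x_2y_2$; hence no edge of $G$ can use the queue page, and $G$ must admit a $2$-stack layout. The converse direction is an explicit drawing. In other words, the rigidity lemmas are used to \emph{saturate} the queue page over the region occupied by $G$, not to create truth-value choices. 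If you want to salvage a SAT-based reduction you would need to design and verify entirely new gadgets from scratch; as written, your proposal does not constitute a proof.
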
	
	\begin{proof}
		 The problem is clearly in \NP. We show the result by a reduction from the problem of deciding the existence of a 2-stack layout, which is \NP-complete and equivalent to decide whether a graph is subhamiltonian~\cite{BernhartK79,ChungLR87}. 
		Let $ G' = (V',E') $ be a graph constructed as in Lemma~\ref{lem:positioning}.
		Identify the special vertex $ u $ in $ V' $ with any vertex in $ G $ and add the rest of $ G $ to $ G' $.
		Clearly, if $ G $ has a $ 2 $-stack layout, we can construct a 2-stack 1-queue layout of $ G' $ as sketched in Fig.~\ref{fig:reduction}(b).
		Conversely, let $\Gamma = (\prec,\{S_1,S_2\},\{Q\}) $ be a $ 2 $-stack $ 1 $-queue layout of  $ G' $.
		As for $ u $ in Lemma~\ref{lem:positioning}, we find that $ w_1 \prec v \prec w_2 $ for every neighbor $ v \in V $ of $ u $.
		By induction we find for all $ v' \in V $ that $ w_1 \prec v' \prec w_2 $.
		Hence all edges in $ G $ are nested by $ x_1y_1 $ and $ x_2y_2 $, which are both in $ Q $.
		It follows that $ G $ has a $ 2 $-stack layout.
	\end{proof}

	Our second complexity result shows that adding stack or queue pages to the specification of an already \NP-complete mixed linear layout problem with given vertex order $\prec$ remains \NP-complete. Note that for $ s = 4 $ and $ q = 0 $ the problem of deciding if the edges can be assigned to the pages even when the vertex order is fixed is known to be \NP-complete~\cite{unger88}.

	\begin{restatable}[$ \star $]{theorem}{thmaddpages}
		\label{thm:addqadds}
		Let $ G = (V,E) $ be a simple undirected graph and $\prec$ a fixed  order of $V$. If it is \NP-complete to decide if $ G $ admits an $ s $-stack $ q $-queue layout respecting $\prec$, then it is also \NP-complete to decide if $ G $ admits  (i) an $ s $-stack $ (q + 1) $-queue layout or (ii) an  $ (s +1) $-stack $ q $-queue layout respecting $\prec$, respectively.
	\end{restatable}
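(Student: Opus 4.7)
The plan is to reduce from the hypothesized \NP-complete $s$-stack $q$-queue problem with fixed vertex order. In both parts, given an instance $(G,\prec)$, I would augment it with a gadget consisting of new vertices and edges so that in any extended layout exactly one page of the new type is consumed by the gadget, and no original edge can share that page. Thus the effective capacity for $E(G)$ remains $s$ stack pages and $q$ queue pages, establishing the desired equivalence.

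For part (i) I would introduce $s+1$ new vertex pairs $(a_i,b_i)_{i=1}^{s+1}$ and extend $\prec$ to $\prec'$ by placing $a_1\prec'\cdots\prec' a_{s+1}$ before all of $V$ and $b_1\prec'\cdots\prec' b_{s+1}$ after all of $V$, adding the edges $a_ib_i$. These edges pairwise cross, since $a_i\prec' a_j\prec' b_i\prec' b_j$ for $i<j$, so at most $s$ of them can occupy stack pages and by pigeonhole at least one must lie on a queue page. Because each $a_ib_i$ nests every original edge, that queue page is unusable for $E(G)$. In the forward direction, any $s$-stack $q$-queue layout of $G$ respecting $\prec$ is extended by placing all $s+1$ gadget edges together on the extra queue page (they pairwise cross, hence do not pairwise nest, and therefore coexist on one queue page). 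In the backward direction, restricting any $s$-stack $(q+1)$-queue layout of $G'$ to $E(G)$ yields the desired layout of $G$.

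For part (ii) I would mirror the construction: use $q+1$ pairwise-nesting gadget edges, each of which crosses every original edge. Pairwise nesting is easy to arrange by placing the new vertex pairs in a rainbow pattern, and the analysis is the natural dual of (i) --- at most $q$ pairwise-nesting edges fit on $q$ queue pages, so pigeonhole forces at least one onto a stack page, and crossing every original edge renders that stack page unusable for $E(G)$.

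The main obstacle lies in realizing the ``crosses every original edge'' requirement for (ii): placing the new vertices only at the extremes of $\prec'$, as in (i), makes the gadget edges nest rather than cross interior original edges. I would address this by interleaving the new vertex pairs with the original vertices, inserting auxiliary dummy vertices at strategic gaps of $\prec$ so that the span of each $a_ib_i$ intersects every original edge's span from the inside while the $(a_i,b_i)$ remain pairwise nested. Once this delicate placement is carried out, the pigeonhole counting argument proceeds exactly as in (i), completing the reduction.
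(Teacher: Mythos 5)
Your part~(i) is correct and is essentially the paper's own construction: a matching of $s+1$ pairwise-crossing edges whose left endpoints all precede $V$ and whose right endpoints all follow $V$, so that each matching edge nests every edge of $G$; one matching edge per stack page at most, so pigeonhole pushes one onto a queue page, which is then unusable for $E$, and in the forward direction all $s+1$ edges sit together on the new queue page.

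Part~(ii), however, has a genuine gap that the ``dual'' intuition cannot repair. Wherever you insert the endpoints of a single gadget edge $a_ib_i$ into the gaps of $\prec$, the original vertices lying strictly between $a_i$ and $b_i$ form a contiguous block $B_i$ of $\prec$. The edge $a_ib_i$ crosses an original edge $v_jv_k$ if and only if exactly one of $v_j,v_k$ lies in $B_i$; if both lie in $B_i$ it nests $v_jv_k$, and if neither does the two edges do not conflict. Hence ``$a_ib_i$ crosses every original edge'' forces every edge of $G$ to have exactly one endpoint in the contiguous block $B_i$, i.e., $G$ would have to be bipartite with respect to an interval cut of $\prec$. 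This already fails for three consecutive vertices carrying the edges $v_1v_2$, $v_2v_3$, $v_1v_3$, and no placement of auxiliary dummy vertices changes it, since the obstruction is that one interval has only two endpoints. This is precisely why the paper does not use single edges here but a \emph{star} $Z$ centered at a new vertex $w$ placed before $V$, with one leaf inserted into every gap of $\prec$: the star edges \emph{collectively} cross every edge of $E$ (each $uv\in E$ crosses the star edge whose leaf lies in a gap inside the span of $uv$), yet they pairwise share the endpoint $w$ and so can coexist on a single stack page. The remaining work --- which is the bulk of the argument and is absent from your proposal --- is to \emph{force} all of $Z$ onto one stack page: the paper adds $q$ mutually nesting matchings of $s+2$ pairwise-crossing edges each, nested under $Z$, so that every queue page receives one of their edges and is thereby blocked for $Z$, plus one further matching of pairwise-crossing edges that cross all of $Z$ and occupy $s$ distinct stack pages. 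Only then is $Z$ confined to the single remaining stack page, which in turn is unusable for $E$. Without machinery of this kind, even a corrected crossing gadget could simply be distributed over several stack pages without blocking any single page for $E$, and the backward direction of your reduction fails.
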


	\section{Heuristic Algorithm}\label{sec:heuristic}
	Most heuristics for minimizing crossings in book drawings work in two steps~\cite{Klawitter2018}.
	First compute a vertex order and then a page assignment. 
	We propose a new page assignment heuristic, specifically tailored to mixed linear layouts. 
	To the best of our knowledge, this is the first heuristic for minimizing crossings and nestings in mixed linear layouts. %
	It uses stack and queue data structures for keeping track of conflicts and estimating possible future conflicts.
	The design allows us to consider the assigned and unassigned edges at the same time while efficiently processing them to run in $O(m^2)$ time for a graph with $m$ edges.

In the following, we describe how the algorithm works for a 1-stack 1-queue layout. 
We note that it is straight forward to adapt our approach to $ s $-stack, $ q $-queue layouts for arbitrary $ s $ and $ q $.
A stack (queue) can be used to validate that a given page has no crossings (nestings) by visiting the vertices in their linear order and inserting (removing) each edge when the left (right) end-vertex is visited, respectively~\cite{Heath1992}. 
We can use a similar strategy for our heuristic.
Here it is allowed to remove edges even if they are not on top of the stack or in front of the queue, but of course this might produce conflicts. 
Let $ S $ be a stack and $ Q $ a queue.
We additionally keep two counters for each edge $ e $, the so called \emph{crossing counter} $ c(e) $ and the \emph{nesting counter} $ n(e) $.
The vertices are processed from left to right in a given vertex order. 
For the current vertex $ u $ we insert all edges $ e = uv $ into $ S $ and $ Q $.
If there are multiple edges $ uv $, we add them according to their length to $ S $ and $ Q $.
For $ S $ we sort them from long to short, for $ Q $ from short to long.
Once the second vertex $v$ of an edge $e=uv$ is visited, we remove $e$ from $ S $ and $ Q $, and decide to which page we assign it. 
Let $ s_e $ be the number of edges on top of $ e $ in $ S $ and $ q_e $ the number of edges  in front of $ e $ in $ Q $.
Then we assign $ e $ to the stack page if $ c(e) + 0.5s_e \leq n(e) + 0.5q_e $ and update $ c(e') $ for each edge $ e' $ on top of $ e $ in $ S $.
Otherwise we assign $ e $ to the queue page and update $n(e')$ for each edge $e'$ in front of $ e $ in $ Q $. 
Intuitively we estimate for each edge $ e $ how many conflicts this edge produces for edges $ e' $ to be processed later.
The advantage of this estimation
is that we potentially assign the edge to a page that adds more conflicts now, but might create fewer conflicts in the future.

    \section{Experiments}\label{sec:experiments}
	We denote our algorithm as \emph{stack-queue} heuristic and  compare it with the two page assignment heuristics \emph{eLen}~\cite{Cimikowski2002} and \emph{ceilFloor}~\cite{Satsangi2013} that are commonly used 
	with book drawings~\cite{Klawitter2018}.\footnoteref{ftn:code}
	Both 
	can be adapted to mixed layouts, while other book drawing heuristics try to explicitly partition the edges into planar sets, which is obviously not suitable for queue pages in mixed layouts. 
	Both process the edges by decreasing length and greedily assign each edge to the page where it causes fewer conflicts at the time of insertion. 
	In case of ties, a stack page is preferred over a queue page.
	The difference is that \emph{eLen} computes the length based on the linear vertex order and \emph{ceilFloor} based on the corresponding cyclic vertex order as follows.
Given a vertex order $ 1 \prec 2 \prec \ldots \prec n $, \emph{eLen} considers the edge $ (1, n) $ first and the edges $ (i, i + 1) $  last. 
In \emph{ceilFloor} the length of an edge $uv$ is defined as $ min(|u - v|, n - |u - v|) $. 
All three heuristics run in $O(m^2)$. %

The goal of our experiment is to explore the performance differences in terms of the number of conflicts per edge of the adapted book drawing algorithms compared to our new heuristic. 
We thus measure 
the resulting number of conflicts per edge for all three algorithms, as well as record for each instance the algorithm with the fewest number of conflicts.
We first tested the algorithms on the complete graphs with up to $ 50 $ vertices.
Furthermore, we generated 500 random graphs for each number of vertices in $\{25, 50, \dots, 400\}$ from different \emph{sparse} graph classes, see Fig.~\ref{fig:winner_400_vertices}, since it is known that both, stack and queue page,
can contain at most $2n-3$ conflict-free edges~\cite{BernhartK79,HeathR92}.
All experiments ran on a Linux cluster (Ubuntu 16.04.6 LTS), where each node has two Intel Xeon E5540 (2.53 GHz Quad Core) processors and 24GB~RAM. %
        The running time of one run of each algorithm was relatively low,
    	taking less than one second on average and at most 2.5 seconds for the denser random graphs with $ 400 $ vertices.

    For the complete graphs 
    \emph{stack-queue} was the best page assignment heuristic producing about $2/3$ of the conflicts of \emph{eLen} and \emph{ceilFloor}.
	For other graphs, the vertex order has a strong effect on the results. In our experiments 
	we first compared three state-of-the-art vertex order heuristics (breadth-first search (\emph{rbfs}~\cite{Satsangi2013}), depth-first search (\emph{AVSDF}~\cite{He2004}) and connectivity (\emph{conGreedy}~\cite{Klawitter2018})) before applying the page assignment heuristics. It turned out that for all of them the same vertex order heuristic performed best on the same graph class, so that all experiments could be run without bias on exactly the same input order.

	\smallskip
\noindent\textit{Benchmark graphs.}   
We first generated random (not necessarily planar) graphs of $n$ vertices and either $m=3n$ or $m=6n$ edges. 
The graphs were created by drawing uniformly at random the required number of edges, discarding disconnected graphs. The best vertex order heuristic was \emph{conGreedy}. %

Since 1-stack 1-queue layouts are especially interesting for planar and planar bipartite graphs~\cite{p-mllpg-17}, we generated random planar 
and maximal planar bipartite graphs. The planar graphs are generated as 
Delaunay triangulations of $ n $ random points in the plane.
Since every planar bipartite graph has a 2-stack embedding~\cite{DBLP:journals/dcg/FraysseixMP95}, we randomly generated a vertex order of alternating vertices from both vertex sets to ensure that a Hamiltonian path exists. We then randomly selected two vertices of the two sets and added the edge to the graph if it was possible to do so without a crossing. 
We repeated the process of randomly selecting the vertices until the maximum number of $ 2n - 4 $ edges had been reached.
The vertex order for the Delaunay triangulations was computed by \emph{rbfs} and for maximal planar bipartite graphs by \emph{AVSDF}.
As it turned out that \emph{stack-queue} did not perform as well as \emph{ceilFloor} and \emph{eLen} on the Delaunay triangulations, which is in contrast to the random and planar bipartite graphs, we wondered whether the presence of many triangles 
might be the reason.
Hence, we considered two graph classes with many triangles and the same maximal edge densities as planar and planar bipartite graphs, respectively, namely planar 3- and 2-trees. For 2-trees the best vertex order was computed by \emph{AVSDF} and for 3-trees by \emph{conGreedy}.
	
	\begin{figure}[tb]
		\centering
		\begin{subfigure}{.33\textwidth}
			\centering
			\includegraphics[width=1\textwidth]{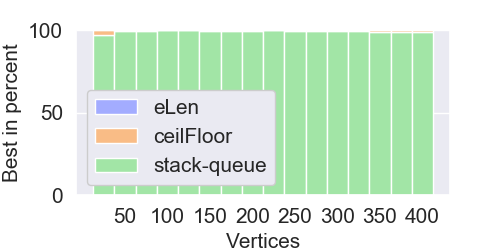}
			\caption{Random $m=3n$}
		\end{subfigure}%
		\begin{subfigure}{.33\textwidth}
			\centering
			\includegraphics[width=1\textwidth]{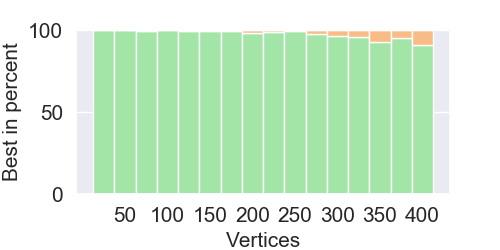}
			\caption{Random $m=6n$}
		\end{subfigure}%
		\begin{subfigure}{.33\textwidth}
			\centering
			\includegraphics[width=1\textwidth]{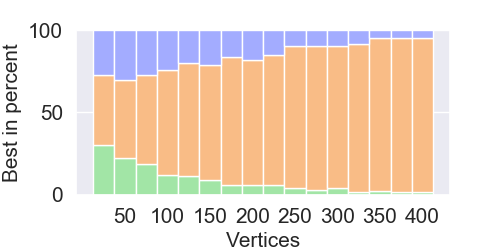}
			\caption{Delaunay triangulations}
		\end{subfigure}\\ %
		\begin{subfigure}{.33\textwidth}
			\centering
			\includegraphics[width=1\textwidth]{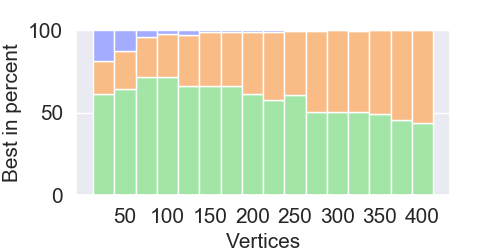}
			\caption{Planar bipartite graphs}		
		\end{subfigure}%
		\begin{subfigure}{.33\textwidth}
			\centering
			\includegraphics[width=1\textwidth]{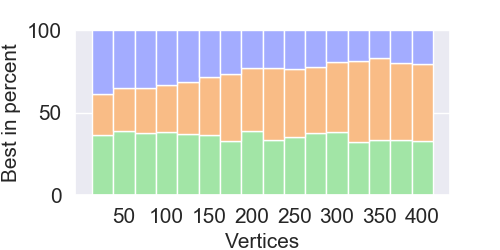}
			\caption{2-trees}		
		\end{subfigure}%
		\begin{subfigure}{.33\textwidth}
			\centering
			\includegraphics[width=1\textwidth]{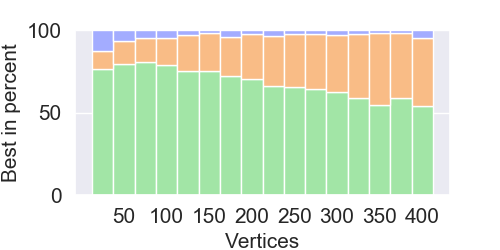}
			\caption{3-trees}		
		\end{subfigure}%
		\caption{How many times each algorithm obtained the fewest conflicts in percent.}
		\label{fig:winner_400_vertices}
	\end{figure}		
	
		\smallskip
	\noindent\textit{Results.}
	Aggregated results of our experiments are plotted in Fig.~\ref{fig:winner_400_vertices} and~\ref{fig:cpe_400_vertices}. 	%
	For random graphs
	\emph{stack-queue} performs best among the three heuristics for almost all instances, even though the difference to \emph{ceilFloor} in conflicts per edge is small. 
For Delaunay triangulations, \emph{stack-queue} performs  best for small graphs ($ n \leq 25 $), but for the larger instances \emph{ceilFloor} computes  better solutions for the majority of instances. In terms of conflicts per edge, however, all three algorithms are quite close together. In the case of planar bipartite graphs, \emph{stack-queue} is the best algorithm for up to $ 300 $ vertices. Afterwards \emph{ceilFloor} performs slightly better, but in both cases, again, the difference in the number of conflicts is small. For 2-trees, the results are more or less evenly split among all three algorithms. %
	Yet, for 3-trees, \emph{stack-queue} computes the best solutions for 70--80\% of the instances with up to $100$ vertices. The differences in the number of conflicts per edge is also more noticeable. For larger instances \emph{ceilFloor} catches up with \emph{stack-queue}.

		\smallskip
\noindent\textit{Discussion.}
The results of our experiments showed that the proposed \emph{stack-queue} heuristic beats or competes with previously existing and suitably adapted page assignment heuristics for book drawings on most of the tested benchmark graph classes with the exception of Delaunay triangulations, where \emph{ceilFloor} performed best. Since the running time of all three algorithms is $O(m^2)$ this does make \emph{stack-queue} a suitable method for computing 1-stack 1-queue layouts.

	\section{Conclusion}
	We believe it is possible to adapt our technique from Theorem~\ref{thm:1q2s} for $ s > 2 $ and $ q = 1 $. The biggest obstacle is to find such rigid structures as the double-$ K_8 $. In the algorithmic direction it could be interesting to investigate specialized heuristics for finding vertex orders in the queue- and mixed layout case. Whether every planar bipartite graph admits a $ 1 $-stack $ 1 $-queue layout remains open.

	\bibliographystyle{splncs04}
	\bibliography{paper}

\clearpage
	\appendix
	\section{Omitted Proofs from Section~\ref{sec:hardness}}\label{app:hardness}
	The following observations follow from exhaustively searching over all possible orders of the considered graphs.
	
	\begin{observation}\label{obs:K8}
		Let $ \Gamma = (\prec,\{S_1,S_2\},\{Q\}) $ be a 2-stack 1-queue layout of~$ K_8 $ and let the vertices $\{v_1, \dots, v_8\}$ be ordered by $\prec$. Then 		$ v_1v_8, v_1v_7, v_2v_8 \in S_1 \cup S_2 $,
		$ v_1v_7 $ and $ v_2v_8 $ are not in the same stack,
		and $ v_1v_3, v_6v_8 \in Q $.
	\end{observation}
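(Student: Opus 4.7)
The plan is to combine edge-capacity counting with a structured case analysis over the finitely many page assignments once the longest edges are pinned down; this mirrors the authors' exhaustive search but with clear structural pruning.

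First, I would force $v_1v_8$, $v_1v_7$, and $v_2v_8$ into the two stacks. The crucial observation is that if $v_av_b \in Q$, then no edge with both endpoints strictly inside the interval $(v_a,v_b)$ can also lie in $Q$, since it would be nested by $v_av_b$. If $v_1v_8 \in Q$, then every other edge of $K_8$ is nested inside $v_1v_8$, so the remaining $27$ edges must go into the two stacks; but each stack page holds at most $2n-3=13$ edges, giving $27 > 26$, a contradiction. If $v_1v_7 \in Q$, then the $\binom{5}{2}=10$ edges of the induced $K_5$ on $\{v_2,\dots,v_6\}$ would all be forced into the two stacks (each is nested by $v_1v_7$, so none may join $Q$); but $K_5$ has book thickness $3$, so two stacks cannot accommodate them. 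A symmetric counting argument applied to the interval $(v_2,v_8)$ rules $v_2v_8$ out of $Q$.

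Second, the pair $v_1v_7$ and $v_2v_8$ crosses, since $v_1\prec v_2\prec v_7\prec v_8$, so by the definition of a stack page the two edges cannot share a stack. This yields the second claim of the observation.

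Finally, for $v_1v_3 \in Q$ (the argument for $v_6v_8$ is symmetric), I would fix $v_1v_7 \in S_1$ and $v_2v_8 \in S_2$ without loss of generality and propagate constraints. Any edge crossing $v_1v_7$ is excluded from $S_1$, any edge crossing $v_2v_8$ is excluded from $S_2$, and any edge nested with an already-placed queue edge is excluded from $Q$; iterating this determines the pages of several further long edges such as $v_1v_6$, $v_3v_8$, and $v_2v_7$, and eventually shows that placing $v_1v_3$ in either stack creates an unavoidable conflict with an already placed edge, leaving $Q$ as the only option. This is the main obstacle: the resulting case tree, while finite, branches considerably, which is precisely why the authors defer to a computer-assisted exhaustive search over all page assignments consistent with the partial constraints derived above.
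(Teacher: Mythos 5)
Your structural arguments for the first two claims are mostly sound, but one step fails as written and the third claim is not actually proved. The counting argument for $v_1v_8$ is wrong: if $v_1v_8\in Q$, the twelve other edges incident to $v_1$ or $v_8$ are \emph{not} nested by $v_1v_8$ (nesting as defined requires $u\prec w\prec x\prec v$, which fails when an endpoint is shared), so only the $\binom{6}{2}=15$ edges of the inner $K_6$ on $\{v_2,\dots,v_7\}$ are excluded from $Q$ --- not $27$ --- and $15\le 26$ yields no contradiction by edge counting. The step is repairable by the same device you use for $v_1v_7$: those $15$ edges would need a $2$-stack layout of $K_6$, which is impossible since $K_6$ is non-planar and hence has book thickness $3$. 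Your $K_5$ argument for $v_1v_7$, its mirror for $v_2v_8$, and the crossing argument for why $v_1v_7$ and $v_2v_8$ lie in different stacks are all fine.

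The real gap is the claim $v_1v_3,v_6v_8\in Q$. The propagation you sketch does not determine it: $v_1v_3$ shares an endpoint with $v_1v_7$ and with $v_1v_8$, so it crosses only $v_2v_8$ among the placed stack edges; this excludes it from one stack but leaves the other stack available, and nothing derived so far forces it into $Q$. Likewise $v_2v_7$ crosses none of the three long edges, and $v_1v_6$ and $v_3v_8$ are each excluded from only one stack, so the assertion that iterated propagation ``determines the pages of several further long edges'' is unsubstantiated. You then defer the remaining case tree to computer search --- which is in fact precisely what the paper does: this observation carries no hand proof and is stated as the outcome of an exhaustive enumeration over all orders and page assignments of $K_8$. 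So your proposal contributes a clean human-readable derivation of the first two claims (once the $v_1v_8$ step is fixed), but for the third claim it collapses to the same computational verification and does not constitute an independent proof.
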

	
	\begin{observation}\label{obs:K8andK8}
		Let $ G $ and $ H $ be two distinct $K_8$'s. Of the vertices of $ G $  and $ H $ only the first and last can interleave in any order of $ 2 $-stack, $ 1 $-queue layout of $ G \cup H $.
	\end{observation}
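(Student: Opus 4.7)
The plan is to leverage the rigid structural constraints from Observation~\ref{obs:K8}, applied independently to each of the two $K_8$'s, and to derive a page-assignment conflict whenever a non-extreme vertex of one $K_8$ ends up separated from its siblings by vertices of the other.

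First, I would label the vertices of $G$ (resp.\ $H$) by $g_1 \prec g_2 \prec \cdots \prec g_8$ (resp.\ $h_1 \prec \cdots \prec h_8$) in their induced suborders of $\prec$. By Observation~\ref{obs:K8}, the edges $g_1g_8, g_1g_7, g_2g_8$ all lie in $S_1 \cup S_2$, with $g_1g_7$ and $g_2g_8$ occupying different stacks, while $g_1g_3, g_6g_8 \in Q$; the analogous statements hold for $H$. These eight ``forced'' edges form the skeleton on which the argument is built.

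Next, suppose for contradiction that an interior vertex of one $K_8$ ends up separated from its siblings by the other --- say $g_i \prec h_j \prec g_{i+1}$ for some $i \in \{2,\ldots,6\}$, the symmetric case with the roles of $G$ and $H$ swapped being identical. I would run a case analysis keyed on whether $h_j$ is extreme ($j \in \{1,8\}$) or interior ($2 \le j \le 7$) within $H$, and on which forced $G$-edges span $h_j$. In each sub-case, the forced $H$-edges incident to $h_j$ (or, when $j \in \{1,8\}$, the long forced edges $h_1h_7, h_2h_8, h_1h_3, h_6h_8$) must share the three pages with the forced $G$-edges covering or meeting $h_j$. The key geometric fact is that such a forced $H$-edge ``sticking out'' of the interval $[g_i, g_{i+1}]$ unavoidably collides with a forced $G$-edge: either two forced stack edges from opposite $K_8$'s are driven into the same stack where they cross, or two forced queue edges nest.

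The main obstacle will be the sheer volume of sub-cases --- roughly $5$ choices of $i$ times $8$ choices of $j$, multiplied by the two possible distributions of $\{g_1g_7, g_2g_8\}$ (and of $\{h_1h_7, h_2h_8\}$) across $S_1,S_2$ --- before any symmetry reduction. I would exploit the symmetries of reversing $\prec$, swapping the two stack labels, and swapping $G \leftrightarrow H$ to collapse the analysis to a handful of representative configurations, each of which is then ruled out by drawing the forced skeleton and pointing to the unavoidable conflict. This is essentially a human-readable rendition of what the authors do by exhaustively enumerating the $\binom{16}{8}$ relative orderings against the forced page assignments of Observation~\ref{obs:K8}, and the computer verification remains the natural fallback if the hand analysis proves too unwieldy to present compactly.
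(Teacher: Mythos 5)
The paper offers no written argument for this observation at all: as stated at the top of Appendix~A, it is established purely by exhaustive computer search over all orders (and page assignments) of the sixteen vertices. Your proposal --- deriving the claim by hand from the forced skeleton of Observation~\ref{obs:K8} applied to each induced sub-layout --- is therefore a genuinely different and more illuminating route, and the underlying idea is sound: the restriction of a conflict-free $2$-stack $1$-queue layout of $G \cup H$ to either $K_8$ is itself a valid layout, so the five forced edges of each copy must coexist on three pages, and in the configurations one can spot-check (e.g.\ an interior $h_j$ landing inside $G$'s span forces a long $H$-stack edge to cross both $g_1g_7$ and $g_2g_8$, which occupy different stacks; or $h_1h_3 \in Q$ nesting $g_1g_3$ and $g_6g_8$) a conflict does materialize.

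However, as written this is a plan rather than a proof. The case analysis is announced but never executed, and two completeness issues remain. First, the enumeration of violating configurations is not exhaustive: restricting to $g_i \prec h_j \prec g_{i+1}$ with $i \in \{2,\ldots,6\}$ misses interior $H$-vertices landing in $(g_1,g_2)$ or $(g_7,g_8)$, and more importantly the dual violations in which interior vertices of $G$ are the ones that interleave because $h_1$ sits deep inside $G$'s span while $h_8$ lies outside it --- these are not symmetric images of your stated cases. Second, you have not established that the forced skeleton of Observation~\ref{obs:K8} alone suffices to produce a conflict in \emph{every} violating configuration; whether a skeleton edge crosses or nests another depends on where all sixteen vertices sit, and a single surviving configuration would sink the argument. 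Since you explicitly fall back on the computer verification if the hand analysis grows unwieldy, the proposal in its current form does not yet improve on the paper's one-line justification; to do so you would need to actually exhibit the representative cases after symmetry reduction and point to the conflict in each.
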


	\begin{observation}\label{obs:K8K8}
		Let $ G = (V_1\cup V_2,E) $ be a double-$ K_8 $ with shared vertices $ u,v $, outer vertices $ w,z $ and
		$\Gamma =  (\prec,\{S_1,S_2\},\{Q\}) $ a 2-stack 1-queue layout of $ G $. We find w.l.o.g.
		that $ w \prec v_1^1\prec \ldots \prec v_5^1 \prec u \prec v \prec v_1^2 \prec \ldots \prec v_5^2 \prec z$,
		where $ v_i^1 \in V_1 $ and $ v_i^2 \in V_2 $ for all $ i \in \{1,\ldots,5\} $.
	\end{observation}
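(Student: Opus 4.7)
The plan is to build up the claimed vertex order from the outside in, exploiting Observations~\ref{obs:K8} and~\ref{obs:K8andK8} to constrain the possibilities at each stage and finishing with the same kind of exhaustive search that underpins the preceding observations. Let $a_1 \prec \ldots \prec a_8$ and $b_1 \prec \ldots \prec b_8$ denote the orders induced by $\prec$ on $V_1$ and $V_2$, respectively; the goal is to identify $a_1 = w$, $\{a_8, b_1\} = \{u, v\}$, and $b_8 = z$.

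First I would locate the shared vertices $u$ and $v$. The guiding idea is that a middle vertex of the induced order of a $K_8$ cannot be interleaved with vertices outside that $K_8$, which is what Obs~\ref{obs:K8andK8} gives us (up to the mild adaptation that the two $K_8$'s share $\{u,v\}$ rather than being vertex-disjoint). Applied to the $V_2$-exclusive vertices relative to $G_1$, this forces $\{u,v\} \subseteq \{a_1,a_8\}$, and symmetrically $\{u,v\} \subseteq \{b_1,b_8\}$, so in fact $\{u,v\} = \{a_1,a_8\} = \{b_1,b_8\}$. Consequently $u$ and $v$ are adjacent in $\prec$, occupying positions $7$ and $8$ of the $14$-vertex order, with the six $V_1$-exclusive vertices on one side and the six $V_2$-exclusive vertices on the other. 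Up to the w.l.o.g.\ swap between $G_1$ and $G_2$, the $V_1$-side is the left side.

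Next I would pin down the page assignments at the two extremes of $\prec$ by invoking Obs~\ref{obs:K8} on each sub-$K_8$ separately. Inside $G_1$, the edges $a_1a_8$, $a_1a_7$, and $a_2a_8$ are forced into the two stack pages (with $a_1a_7$ and $a_2a_8$ in different stacks), while $a_1a_3$ and $a_6a_8$ go to the queue; a symmetric conclusion holds for $G_2$. These commitments leave very little slack at either end of $\prec$ for accommodating the additional edge $wz$ joining the outer vertices $w \in V_1 \setminus \{u,v\}$ and $z \in V_2 \setminus \{u,v\}$.

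Finally I would use this slack to locate $w$ and $z$ themselves. Since $wz$ spans from the $V_1$-side across the shared pair to the $V_2$-side, and the stack and queue slots adjacent to $a_1, a_8, b_1, b_8$ are already committed by Obs~\ref{obs:K8}, a short case analysis --- or, mirroring the paper's style, an exhaustive search over the handful of orderings still consistent with the constraints above --- rules out every placement of $w$ and $z$ except $w = a_1$ and $z = b_8$, giving exactly the claimed order. I expect this last case analysis to be the main obstacle, since the placement of $wz$ interacts simultaneously with the stack and queue assignments in both sub-$K_8$'s; the cleanest write-up will reduce the remaining possibilities to a small explicit list and then appeal to the exhaustive-search verification already used in Obs~\ref{obs:K8} and Obs~\ref{obs:K8andK8}.
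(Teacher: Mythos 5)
There is no deductive proof in the paper to measure you against: Observation~\ref{obs:K8K8} is established, like Observations~\ref{obs:K8} and~\ref{obs:K8andK8}, purely by exhaustive computer search over all orders of the double-$K_8$. Your plan to replace this by a deduction from the earlier observations plus a small residual search is a legitimately different route, but as written its first step is wrong. From the non-interleaving property you conclude $\{u,v\}\subseteq\{a_1,a_8\}$ and $\{u,v\}\subseteq\{b_1,b_8\}$, i.e.\ that the shared vertices are the \emph{extremes} of each induced order. That is the opposite of what non-interleaving forces: if $u=a_1$ and $v=a_8$ (and likewise $b_1,b_8$), then all six $V_1$-exclusive and all six $V_2$-exclusive vertices would lie strictly between $u$ and $v$, which is maximal interleaving of the two $K_8$'s. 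The conclusion consistent with the claimed order (and with your own next sentence, which places $u,v$ at global positions $7$ and $8$ rather than $1$ and $14$) is $\{u,v\}=\{a_7,a_8\}=\{b_1,b_2\}$ up to swapping the roles of $G_1$ and $G_2$: the shared pair sits at the end of one induced order and the beginning of the other, where the two spans meet. Your stated ``goal'' $\{a_8,b_1\}=\{u,v\}$ has the same problem --- since $u$ and $v$ belong to both $V_1$ and $V_2$, each must be assigned a position in \emph{both} induced orders, and your formulation silently drops one occurrence of each.

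Two further points need attention before the remaining plan is sound. First, Observation~\ref{obs:K8andK8} is stated and verified for two \emph{disjoint} $K_8$'s; applying it to the two $K_8$'s of a double-$K_8$, which share $u$ and $v$ and carry the extra edge $wz$, requires an explicit argument (e.g.\ splitting each shared vertex into two consecutive copies and discarding $wz$ to obtain a valid $2$-stack $1$-queue layout of $K_8\cup K_8$), not just the remark that the adaptation is ``mild.'' Second, your closing case analysis for locating $w$ and $z$ is deferred to an exhaustive search over ``the orderings still consistent with the constraints above''; since those constraints were derived from the faulty premise, the case list itself would have to be rebuilt from the corrected positions of $u$ and $v$ before that search (which is, in effect, what the paper already does wholesale) can be trusted.
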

	
	\lemnoinsidevertex*
	\begin{proof}
		Let $ u,v \in V $ be the shared vertices of the double-$ K_8 $ in $ G $.
		Assume there was a $ 2 $-stack $ 1 $-queue layout $ \Gamma $ 
		with $ w \prec a \prec z $ and $ b \prec w $ or $ z \prec b $ and
		$ ab \in S_1\cup S_2 $.
		From combining Observations~\ref{obs:K8} and~\ref{obs:K8K8} we obtain that the edges 
		$ wv $, $ vz $, $ wu $, and $ uz $ are all in $ S_1 \cup S_2 $.
		Furthermore if $ wv $ and $ vz $ are in $ S_1 $ it follows that $ wu $ and $ uz $ are in $ S_2 $.
		Hence, if $ ab \in S_1\cup S_2 $, it has to cross one of these four stack edges.
		It remains to show that $ a $ is between the first three or last three vertices of the double-$ K_8 $ in $ \prec $.
		Assume this was not the case, then the edge $ ab $ nests a queue edge by Observation~\ref{obs:K8}, see Fig.~\ref{fig:reduction}(a).
	\end{proof}

	\corinterleave*
	\begin{proof}
		Observe that it is not possible to interleave one vertex in $ V_1 $ with one in $ V_2 $ and vice versa since 
		the edges connecting outermost and shared vertices are all in the stacks.
		Any other possibility directly contradicts Observation~\ref{obs:K8andK8}.
	\end{proof}

	\lempositioning*
	\begin{proof}
		From Corollary~\ref{cor:interleave} we already know that the vertices of the double-$ K_8 $'s can only come sequentially after each other in $ \prec $.		
		With the same argument as in Lemma~\ref{lem:no_inside_vertex} 
		we immediately see that $ x_1,x_2 $ and $ y_1,y_2 $ have to be the two vertices next to the outer vertices $ w_1 $ and $ w_2 $
		in their respective double-$ K_8 $'s.
		Additionally we see that the edges $ x_1y_1 $ and $ x_2y_2 $ have to be in $ Q $ from Observations~\ref{obs:K8} and~\ref{obs:K8K8}.
		Thus it follows that they have to cross each other.
		In the following assume, w.l.o.g., that $ x_1\prec x_2 $ and $ y_1 \prec y_2 $.
		Finally, with Lemma~\ref{lem:no_inside_vertex}, we know that the only possible placements of $ u $ are $ x_1 \prec u \prec x_2 $, $ x_2 \prec u \prec w_1 $, $ w_1 \prec u \prec w_2 $, $ w_2 \prec u \prec y_2 $, and $ y_1 \prec u \prec y_2 $.

		Now assume $ u $ was not between $ w_1 $ and $ w_2 $, then, again with Observations~\ref{obs:K8} and~\ref{obs:K8K8},
		we know that either $ uw_1 $ or $ uw_2 $ has to be in $ Q $.
		Without loss of generality, let $ u $ be between $ x_1,x_2 $, and $ w_1 $,
		then $ uw_2 \in Q $, but also $ uw_2 $ is nested by $ x_1y_1 $, a contradiction.
	\end{proof}

	\begin{figure}[tbp]
		\centering
		\includegraphics{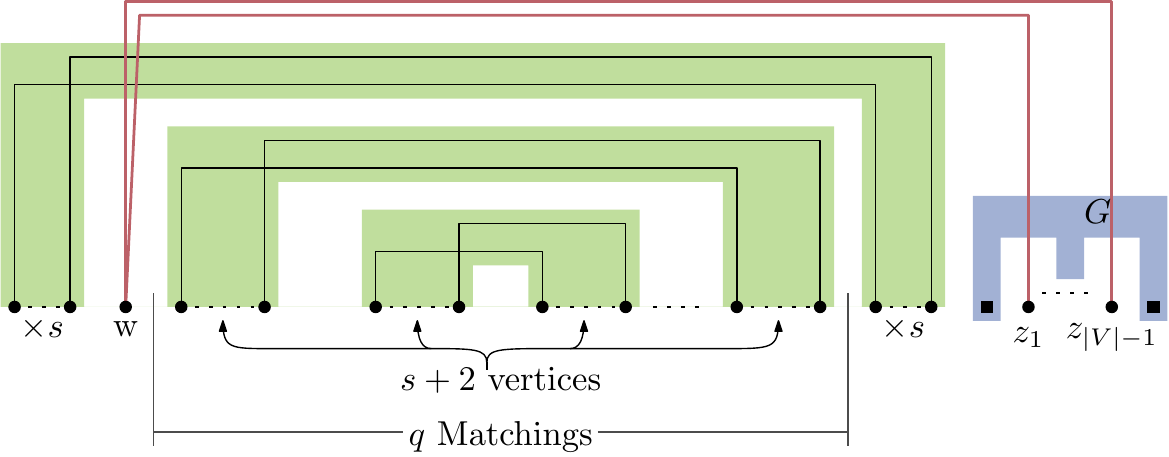}
		\caption{Illustration of the reduction in Theorem~\ref{thm:addqadds}}
		\label{fig:fixed_order_extra_stack_page}
	\end{figure}

	\thmaddpages*
	\begin{proof}
		Claim (i) is straightforward. 
		Assume it is \NP-complete to determine for a graph $ G $ if it has a $ s $-stack, $ q $-queue layout with fixed vertex order $ \prec $. 
		We create a new graph $ G' $ by adding a matching $ M $ on $ 2(s+1) $ new vertices to $ G $. 
		Let $ uv $ be an edge of this matching, then we place $ u $ before the vertices in $ V $ and $ v $ after all vertices in $ V $, i.e., $ uv $ nests all edges of $ G $. 
		Finally we order the vertices of $ M $ such that the edges all pairwise intersect. 
		It is clear that all edges of $ M $ can fit on one queue page. 
		Further, since there are $ s + 1 $ many, one such edge needs to be on a queue page, but it cannot be in any queue page together with an edge in $ E $, since it would nest all edges in $ E $. 
		The correctness follows.
		
		For Claim (ii) the reduction is slightly more complicated. 
		Let $ G = (V,E) $ be a graph and $ \prec $ a linear ordering of $ V $, such that it is \NP-complete to determine if an $ s $-stack $ q $-queue layout  respecting $ \prec $ exists. 
		First we add a star $ Z $ on $ |V| $ new vertices to $ G $. Let $ w $ be the center of this star. 
		We place it to the left of the first vertex of $ V $ in $ \prec $. 
		Now we add one of the $ |V| - 1 $ leaves of $ Z $ in between each pair of consecutive vertices in $ V $. %
		Observe that each edge $ uv \in E $ intersects at least one edge $ wz $ with $ z $ being a leaf of $ Z $.
		
		We further add $ q $  matchings $ M_i $, $ i = 1, \ldots, q $ on $ 2(s + 2) $ vertices. 
		We place each $ M_i $ between $ w $ and the vertices of $ V $ in $ \prec $, i.e., all edges in all $ M_i $ are nested by all edges of $ Z $. 
		We further place the vertices of $ M_i $ such that the edges of $ M_i $ nest all edges of $ M_j $, $ j = 1, \ldots, i - 1 $ and sort them such that the edges pairwise intersect. 
		Finally we add one more matching $ M $ on $ s $ vertices. For $ M $ we place one bipartition to the left of $ w $ in $ \prec $ and the other between the right-most vertex of $ M_q $ and the left-most vertex in $ V $. 
		See Figure~\ref{fig:fixed_order_extra_stack_page} for an illustration. 

		For the correctness we first observe that at least one edge of each $ M_i $ has to go to a queue. 
		It follows that for each queue we find at least one edge of one $ M_i $ and that no edge between vertices of $ M $ can be on a queue page. 
		From this we get that on each stack page we find exactly one edge of $ M $. 
		Finally this means that all edges of $ Z $ must be in one stack, and hence no edge in $ E $ can be in this stack. 
		It follows that if $ G $ has an $ s $-stack $ q $-queue layout then we always find a layout of the augmented graph on $ s + 1 $ stacks and $ q $ queues.
		Conversely, if we find such a layout of the augmented graph by construction all edges in $ E $ are on at most $ s $ stacks and at most $ q $ queues.
	\end{proof}
	
	\section{Additional Plots for Section~\ref{sec:experiments}}
	
	Figure~\ref{fig:cpe_400_vertices} shows the number of conflicts per edge for our experiments with the heuristically optimized vertex orders. Additionally, Figure~\ref{fig:cpe_random_vertex_order} shows the number of conflicts per edge for graphs with up to 100 vertices based on a random vertex order instead. Comparing the two figures, it is no surprise that with a random vertex order the conflict numbers increase. Further, for the Delaunay graphs, \emph{stack-queue}  achieves the lowest conflicts per edge with random vertex order, unlike for the optimized order, where \emph{ceilFloor} performs best. Overall, we observe that the error bands in the plots are more clearly separated for the random vertex order and that \emph{stack-queue} achieves on average one crossing per edge less than \emph{ceilFloor} on all six graph classes for $n=100$. This is not the case for the error bands in Fig.~\ref{fig:cpe_400_vertices} using  optimized vertex orders.

	\begin{figure}[tbp]
	\centering
	\begin{subfigure}{.5\textwidth}
		\centering
		\includegraphics[width=1\textwidth]{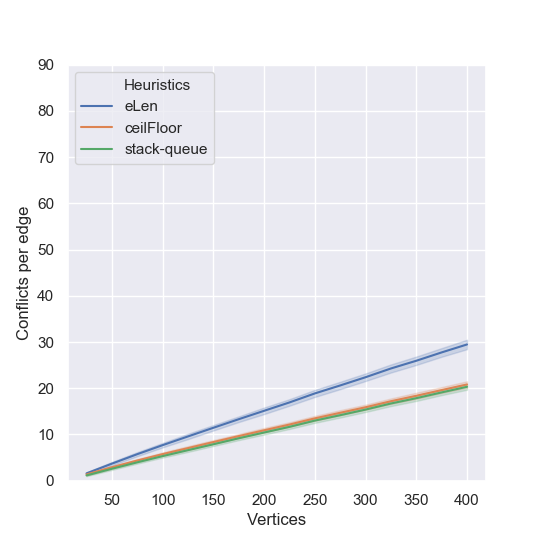}
		\caption{Random $m=3n$}
	\end{subfigure}%
	\begin{subfigure}{.5\textwidth}
		\centering
		\includegraphics[width=1\textwidth]{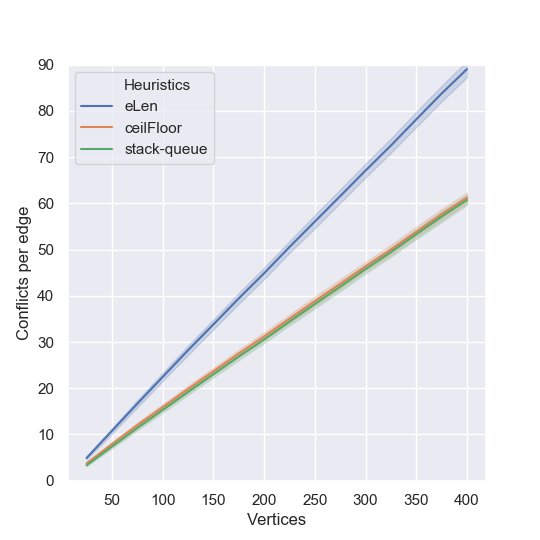}
		\caption{Random $m=6n$}
	\end{subfigure}
	\begin{subfigure}{.5\textwidth}
		\centering
		\includegraphics[width=1\textwidth]{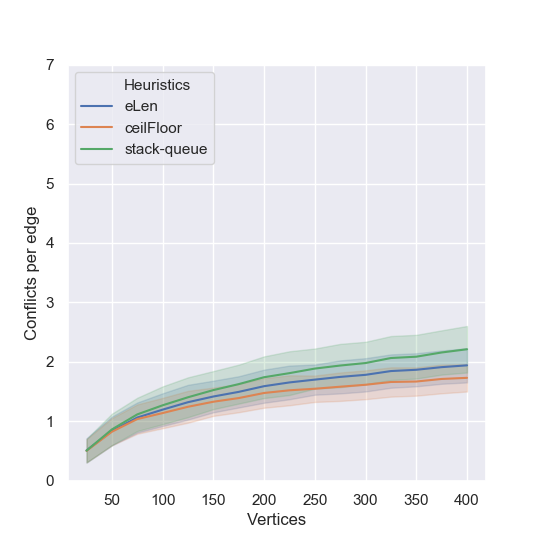}
		\caption{Delaunay triangulations}		
	\end{subfigure}%
	\begin{subfigure}{.5\textwidth}
		\centering
		\includegraphics[width=1\textwidth]{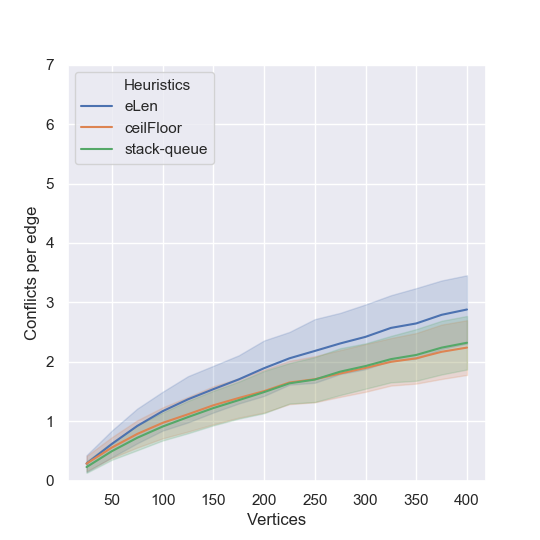}
		\caption{Planar bipartite graphs}		
	\end{subfigure}
	\begin{subfigure}{.5\textwidth}
		\centering
		\includegraphics[width=1\textwidth]{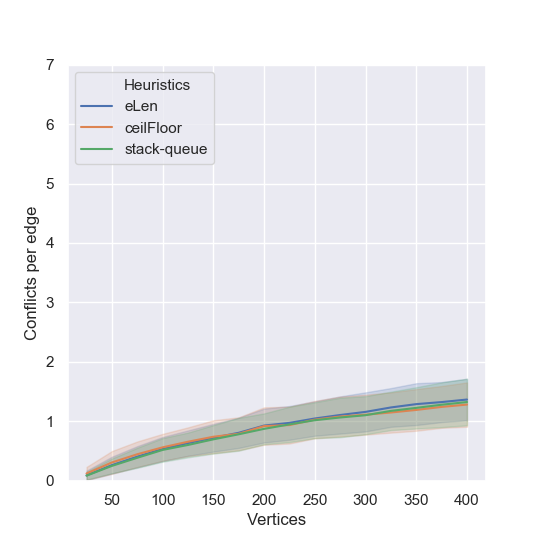}
		\caption{2-trees}		
	\end{subfigure}%
	\begin{subfigure}{.5\textwidth}
		\centering
		\includegraphics[width=1\textwidth]{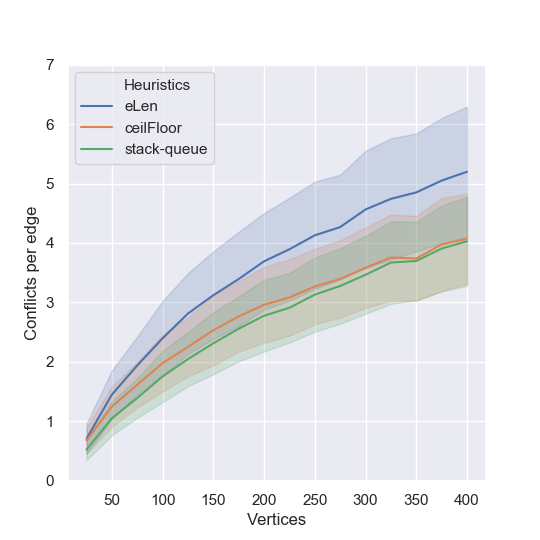}
		\caption{3-trees}
	\end{subfigure}
	\caption{Number of conflicts per edge for the three heuristics and six benchmark graph classes with heuristically optimized vertex orders.
	}\label{fig:cpe_400_vertices}
\end{figure}

\begin{figure}[tbp]
	\centering
	\begin{subfigure}{.5\textwidth}
    	\centering
    	\includegraphics[width=1\textwidth]{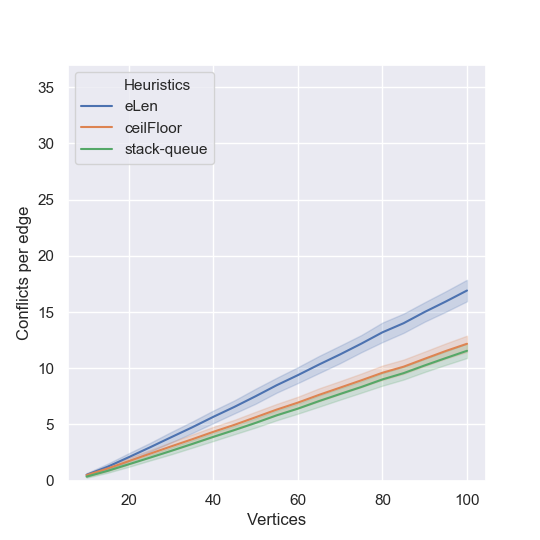}
		\caption{Random $m=3n$}
	\end{subfigure}%
	\begin{subfigure}{.5\textwidth}
    	\centering
    	\includegraphics[width=1\textwidth]{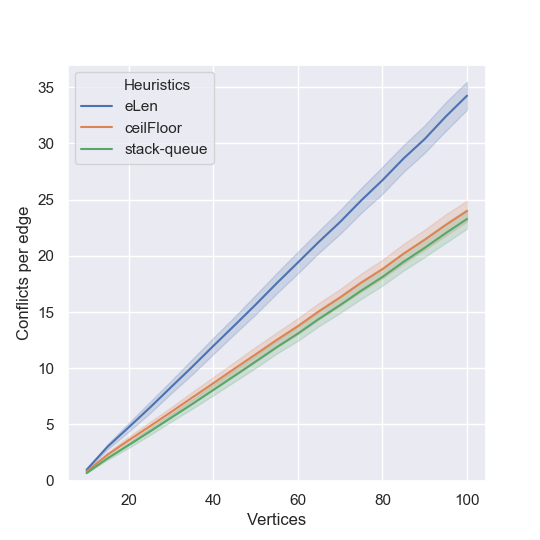}
		\caption{Random $m=6n$}
	\end{subfigure}
	\begin{subfigure}{.5\textwidth}
    	\centering
    	\includegraphics[width=1\textwidth]{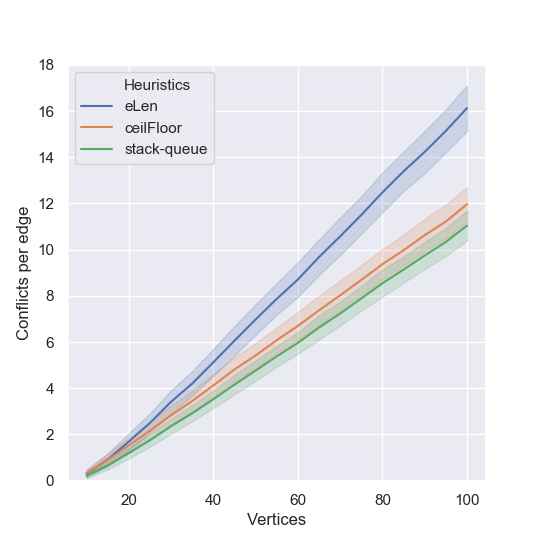}
		\caption{Delaunay triangulations}		
	\end{subfigure}%
	\begin{subfigure}{.5\textwidth}
    	\centering
    	\includegraphics[width=1\textwidth]{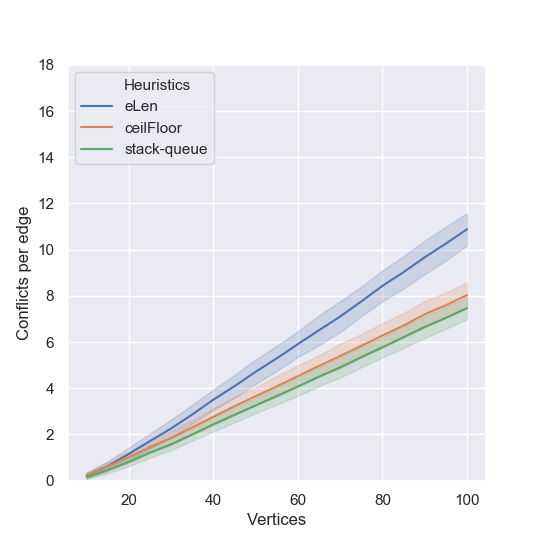}
		\caption{Planar bipartite graphs}		
	\end{subfigure}
	\begin{subfigure}{.5\textwidth}
    	\centering
    	\includegraphics[width=1\textwidth]{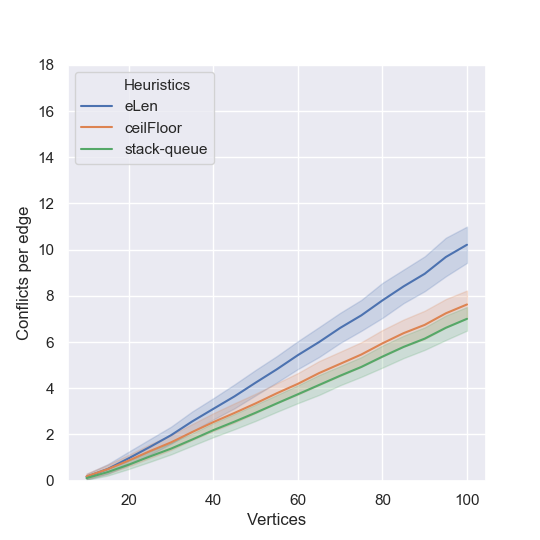}
		\caption{2-trees}		
	\end{subfigure}%
	\begin{subfigure}{.5\textwidth}
    	\centering
    	\includegraphics[width=1\textwidth]{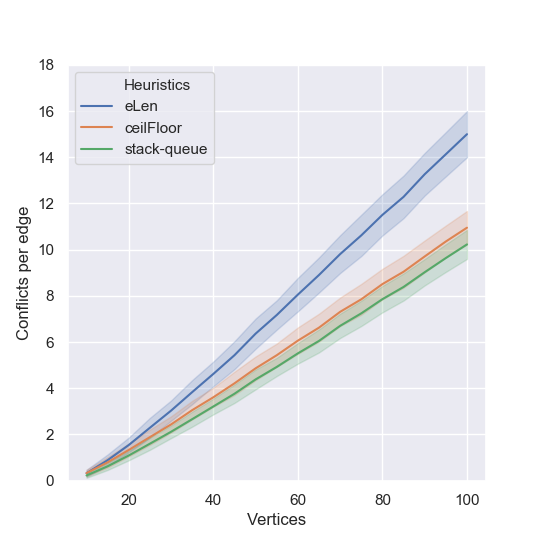}
		\caption{3-trees}
	\end{subfigure}
	\caption{Number of conflicts per edge for the three heuristics and six benchmark graph classes using random initial vertex orders.
}\label{fig:cpe_random_vertex_order}
	\end{figure}

\end{document}